\newcommand{\remove}[1]{}
\newcommand{\IR}{\mathbb{R}}
\newtheorem{theorem}{Theorem}
\newtheorem{lemma}{Lemma}
\newtheorem{definition}{Definition}
\newenvironment{proof}{\noindent {\bf Proof:\,\ }}{\hfill\mbox{\
$\Box$}\smallskip}
\title{Minimum Dominating Set for a Point Set in $\IR^2$}
\author{Ramesh K. Jallu\thanks{Indian Institute of Technology Guwahati,
India}
\and Prajwal R. Prasad \thanks{National Institute of Technology Karnataka, India}
\and Gautam K. Das \footnotemark[1]}
\date{}
\begin{document}
\maketitle
\begin{abstract}
In this article, we consider the problem of computing minimum dominating set for a given set $S$ 
of $n$ points in $\IR^2$. Here the objective is to find a minimum cardinality 
subset $S'$ of $S$ such that the union of the unit radius disks centered at the points in $S'$ covers 
all the points in $S$. We first propose a simple 4-factor and 3-factor approximation algorithms in $O(n^6 \log n)$ 
and $O(n^{11} \log n)$ time respectively improving time complexities by a factor of $O(n^2)$ and 
$O(n^4)$ respectively over the best known result available in the literature 
[M. De, G.K. Das, P. Carmi and S.C. Nandy, {\it Approximation 
algorithms for a variant of discrete piercing set problem for unit disk}, Int. J. of Comp. Geom. and Appl., 
to appear]. Finally, we propose a very important shifting lemma, which is of independent 
interest and using this lemma we propose a $\frac{5}{2}$-factor approximation algorithm and a PTAS for 
the minimum dominating set problem. 
\end{abstract}

{\bf Keywords:} minimum dominating set, unit disk graph, approximation algorithm.

\section{Introduction}
A minimum dominating set $S'$ for a set $S$ of $n$ points in $\IR^2$ is defined as follows: (i) $S' \subseteq S$ 
(ii) each point $s \in S$ is covered by at least one unit radius disk centered at a point in $S'$, and (iii) size of 
$S'$ is minimum. The {\it minimum dominating set} (MDS) problem for a point set $S$ of size $n$ in $\IR^2$ 
involves finding a minimum dominating set $S'$ for the set $S$. We call this problem as a geometric version of 
MDS problem. The MDS problem for a point set can be modeled  
as an MDS problem in unit disk graph (UDG) as follows: A unit disk graph $G = (V,E)$ for a set ${\cal U}$ of $n$ 
unit diameter disks in $\IR^2$ is the intersection graph of the family of disks in ${\cal U}$ i.e., the vertex 
set $V$ corresponds to the set ${\cal U}$ and two vertices are connected by an edge if the corresponding disks have 
common intersection. The minimum dominating set for the graph $G$ is a minimum size subset $V'$ of $V$ such 
that for each of the vertex $v \in V$ is either in $V'$ or adjacent to to a node in $V'$ in $G$. Several people have done 
research on MDS problem because of its wide applications such as wireless networking,  
facility location problem, to name a few. Our interest in this problem arose from the following reason: 
suppose in a city we have a set $S$ of $n$ important locations (houses, etc.); the objective is to provide 
some emergency services (ambulance, fire station, etc.) to each of the locations in $S$ so that 
each location is within a predefined distance of at least one service center. Note that positions of the 
emergency service centers are from the predefined set $S$ of locations only.

\subsection{Related Work}
The MDS problem can be viewed as a general set cover problem, but it is an NP-hard problem \cite{GJ79,J82} 
and not approximable within $c \log n$ for some constant $c$ unless P = NP \cite{RS97}. Therefore 
$O(\log n)$-factor approximation algorithm is possible for MDS problem by applying the algorithm for 
general set cover problem \cite{chvatal79}. Some exciting results for the geometric version of MDS problem 
are available in the literature. 

In the {\it discrete unit disk cover} (DUDC) problem, two sets $P$ and $Q$ of points in $\IR^2$ are given, the 
objective is to choose minimum number of unit disks $D'$ centered at the points in $Q$ such that 
the union of the disks in $D'$ covers all the points in $P$. Johnson \cite{J82} proved that 
the DUDC problem is NP-hard. Mustafa and Ray in 2010 \cite{MR10} proposed a $(1+\delta)$-approximation 
algorithm for $0 < \delta \leq 2$ (PTAS) for the DUDC problem using $\epsilon$-net based local improvement approach. 
The fastest algorithm is obtained by setting $\delta = 2$ for a 3-factor approximation algorithm, which runs 
in $O(m^{65}n$) time, where $m$ and $n$ are number of unit radius disks and number of points respectively \cite{DFLN12}. 
The high complexity of the PTAS leads to further research on constant factor 
approximation algorithms for the DUDC problem. A series of constant factor approximation algorithms for DUDC 
problem are available in the literature: 

\begin{itemize}
\item 108-approximation algorithm [C\u{a}linescu et al., 2004 \cite{CMWZ04}]
\item 72-approximation algorithm [Narayanappa and Voytechovsky, 2006 \cite{NV06}]
\item 38-approximation algorithm in O($m^{2}n^{4}$) time [Carmi et al., 2007 \cite{CKL07}]
\item 22-approximation algorithm in O($m^{2}n^{4}$) time [Claude et al., 2010 \cite{CDDDFLNS10}]
\item 18-approximation algorithm in O($mn+n\log n+m\log m$) time [Das et al., 2012 \cite{DFLN12}]
\item 15-approximation algorithm in O($m^{6}n$) time [Fraser and L\'{o}pez-Ortiz, 2012 \cite{FL12}]
\item $(9+\epsilon)$-approximation algorithm in $O(m^{3(1+\frac{6}{\epsilon})} n \log n)$ time 
[Acharyya et al., 2013 \cite{ABD13}]
\end{itemize}

The DUDC problem is a geometric version of MDS problem for $P=Q$. Therefore all results for the DUDC problem 
are applicable to MDS problem.

The geometric version of MDS problem is known to be NP-hard \cite{CCJ90}. 
Nieberg and Hurink \cite{NH06} proposed $(1+\epsilon)$-factor approximation algorithm 
for $0 < \epsilon \leq 1$. The fastest algorithm is obtained by setting $\epsilon = 1$ for a
$2$-approximation result, which runs in $O(n^{81})$ time \cite{DDCN13}, which is not 
practical even for $n = 2$. Another PTAS for dominating set of arbitrary size disk graph is available 
in the literature proposed by Gibson and Pirwani \cite{GP10}. The running time of this PTAS is 
$n^{O(\frac{1}{\epsilon^2})}$. 

Marathe et al. \cite{MBIRR95} proposed a 5-factor approximation algorithm for the MDS problem. 
Amb{\"u}hl et al. \cite{AEMN06} proposed 72-factor approximation algorithm for weighted dominating 
set (WDS) problem. In the WDS problem, each node has a positive weight and the objective is to find 
the minimum weight dominating set of the nodes in the graph. Huang et al. \cite{HGZW08}, Dai and Yu 
\cite{DY09}, and  Zou et al. \cite{ZWXLDWW11} improved the approximation factor for WDS problem to 
$6+\epsilon$, $5+\epsilon$, and $4+\epsilon$ respectively. First, they proposed $\gamma$-factor 
($\gamma = 6, 5, 4$ in \cite{HGZW08}, \cite{DY09}, and \cite{ZWXLDWW11} respectively) approximation 
algorithm for a subproblem and using the result of their corresponding sub-problems they proposed 
$(\gamma+\epsilon)$-factor approximation algorithms. The time complexity of their algorithms 
are $O(\alpha(n) \times \beta(n))$, where $O(\alpha(n))$ is the time complexity of the algorithm 
for the sub-problem and $O(\beta(n)) = O(n^{4 (\lceil\frac{84}{\epsilon} \rceil)^2})$ is the number 
of times the sub-problem needs to be invoked to solve the original problem. The $(\gamma + 1)$-factor 
approximation algorithm can be obtained by setting $\epsilon = 1$, but the time complexity becomes a
very high degree polynomial function in $n$. Carmi et al. \cite{CKL08} proposed a 5-factor approximation 
algorithm of the MDS problem for arbitrary size disk graph. Fonseca et al. \cite{FFSM12} proposed a 
$\frac{44}{9}$-factor approximation algorithm for the MDS problem in UDG which can be achieved in $O(n+m)$ 
time, when the input is a graph with $n$ vertices and $m$ edges, and in $O(n \log n)$ time, in the geometric 
version of the problem. The same set of authors also proposed a $\frac{43}{9}$-factor approximation algorithm 
for the MDS problem in UDG which runs in $O(n^2 m)$ time \cite{FFSM12-1}. Recently, De at al. \cite{DDCN13} 
considered the geometric version of MDS problem and proposed 12-factor, 4-factor, and 3-factor approximation 
algorithms with running time $O(n \log n)$, $O(n^8 \log n)$, and $O(n^{15}\log n)$ respectively. They also 
proposed a PTAS with high degree polynomial running time. 

\subsection{Our Contribution}
In this paper, we consider the geometric version of MDS problem and propose a series of constant factor 
approximation algorithms. We first propose 4-factor and 3-factor approximation algorithms with running time 
$O(n^6 \log n)$ and $O(n^{11}\log n)$ respectively improving the time complexities by a factor of $O(n^2)$ and 
$O(n^4)$ respectively over the best known result in the literature \cite{DDCN13}. Finally, we propose a new 
shifting strategy lemma. Using our shifting strategy lemma we propose $\frac{5}{2}$-factor and 
$(1+\frac{1}{k})^2$-factor (i.e., PTAS) approximation algorithms for the MDS problem. The running time of 
proposed $\frac{5}{2}$-factor and $(1+\frac{1}{k})^2$-factor approximation algorithms are $O(n^{20} \log n)$ 
and $n^{O(k)}$ respectively. Though the time complexity of the proposed PTAS is same as the PTAS proposed by 
De et al. \cite{DDCN13} in terms of $O$ notation, but the constant involved in our PTAS is smaller than the 
same in \cite{DDCN13}.

\section{4-Factor Approximation Algorithm for the MDS Problem}\label{4factor}
In this section, a set $S$ of $n$ points in $\mathbb{R}^2$ is given inside a rectangular region ${\cal R}$. The 
objective is to find an MDS for $S$. Here we propose a simple 4-factor approximation algorithm. The running time 
of our algorithm is $O(n^6 \log n)$, which is an improvement by a factor of $O(n^2)$ over the best known existing 
result \cite{DDCN13}. In order to obtain a 4-factor approximation algorithm, we consider a partition of 
${\cal R}$ into regular hexagons of side length $\frac{1}{2}$ (see Figure \ref{figure-2}(a)). We use {\it cell} 
to denote a regular hexagon of side length $\frac{1}{2}$.

\begin{lemma} \label{lemma-1x}
 All points inside a single cell can be covered by an unit radius disk centered at any point inside that cell.
\end{lemma}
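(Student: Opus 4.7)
The plan is to reduce the claim to a single geometric fact: the diameter of the hexagonal cell (the maximum pairwise distance between two of its points) equals $1$. Once that is established, the lemma follows immediately, since any two points $p,q$ inside the cell then satisfy $\|p-q\| \le 1$, so the unit-radius disk centered at any $p$ in the cell contains every other point $q$ of the cell.

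To compute the diameter, I would first argue by convexity that the diameter of the regular hexagon is attained at a pair of its vertices. Then, among all pairs of vertices of a regular hexagon, the maximum distance is realized by a pair of diametrically opposite vertices, and for a regular hexagon of side length $s$ this distance equals $2s$. Substituting $s = \tfrac{1}{2}$ gives diameter exactly $1$.

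There is no real obstacle here; the only thing worth being careful about is whether the cell is treated as open or closed (which only affects boundary points but not the inequality $\|p-q\| \le 1$), and making sure one uses the correct diameter formula for a regular hexagon (distance between opposite vertices is $2s$, not $s\sqrt{3}$, which is the distance between opposite edge midpoints). With those points noted, the proof is a one-line appeal to the diameter computation.
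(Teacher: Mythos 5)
Your proof is correct and follows essentially the same route as the paper: the paper also reduces the claim to the fact that the diameter of a regular hexagon of side length $\tfrac{1}{2}$ is $1$, so any two points of a cell are within unit distance. Your elaboration on why the diameter is attained at opposite vertices and equals $2s$ is just a more explicit version of what the paper leaves to a figure.
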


\begin{proof}
The lemma follows from the fact that the distance between any two points inside a regular hexagon of side length 
$\frac{1}{2}$  is at most 1 (for demonstration see the Figure \ref{figure-2}(b)).
\end{proof}

\begin{figure}[!ht]
\begin{center} 
\includegraphics[]{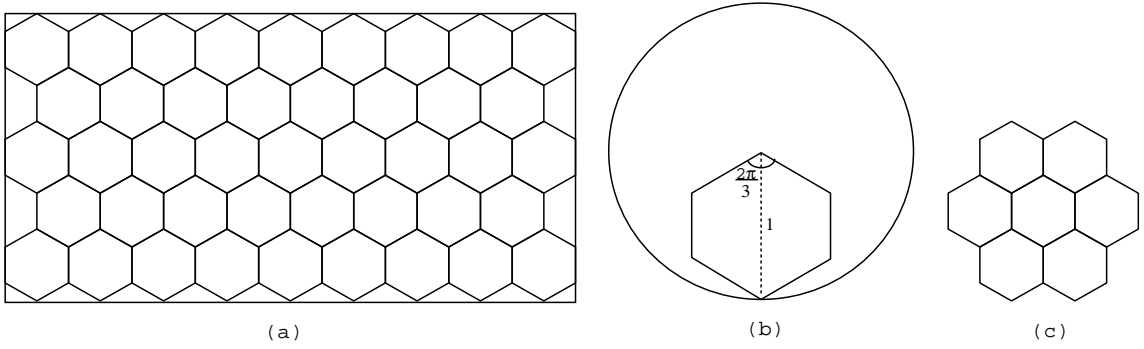}\\
\caption{(a) Regular hexagonal partition (b) single regular hexagon of side length $\frac{1}{2}$ contained 
in an unit radius disk, and (c) a septa-hexagon}
\label{figure-2}
\end{center}
\vspace{-0.2in}
\end{figure}

\begin{definition}
A {\it septa-hexagon} is a combination of 7 adjacent cells such that one cell is inscribed by six other 
cells as shown in Figure \ref{figure-2}(c). 

For a point set $U$, we use $\Delta(U)$ to denote the set of unit radius disks centered at the points in $U$.

Let $U_1$ and $U_2$ be two point sets such that $U_1 \subseteq U_2$. We use $\chi(U_1, U_2)$ to denote the set 
of points such that $\chi(U_1, U_2) \subseteq U_2$ and an unit radius disk centered at any point in 
$\chi(U_1, U_2)$ covers at least one point of $U_1$.
\end{definition}

\subsection{Algorithm overview}
Let us consider a septa-hexagon ${\cal C}$. Recall that ${\cal C}$ is a combination of 7 cells (regular 
hexagon of side length $\frac{1}{2}$). Let $S_1 = S \cap {\cal C}$ and $S_2 = \chi(S_1, S)$. For the 
4-factor approximation algorithm, we first find minimum size subset $S' \subseteq S_2$ such that 
$S_1 \subseteq \bigcup_{d \in \Delta(S')}d$. Call this problem as {\it single septa-hexagon MDS} problem. 
Using the optimum (minimum size) solution of single septa-hexagon MDS problem, we present our main 4-factor 
approximation algorithm. The Lemma \ref{lemma-2x} gives an important feature to design optimum algorithm for
single septa-hexagon MDS problem.

\begin{lemma} \label{lemma-2x}
 If $OPT_{\cal C}$is a minimum cardinality subset of $S_2$ such that 
$S_1 \subseteq \bigcup_{d \in \Delta(OPT_{\cal C})}d$, then $|OPT_{\cal C}| \leq 7$.
\end{lemma}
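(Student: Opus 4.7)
The plan is to exhibit an explicit feasible subset of $S_2$ of cardinality at most $7$ that covers $S_1$; this immediately forces $|OPT_{\cal C}| \leq 7$ by minimality.

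First I would observe that $S_1 \subseteq S_2$. Indeed, every point $p \in S_1 \subseteq S$ has the property that the unit disk centered at $p$ covers $p$ itself, which is a point of $S_1$; so $p$ satisfies the defining property of $\chi(S_1, S) = S_2$. This means we are free to pick centers from within $S_1$.

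Next I would use the hexagonal decomposition of $\cal C$. Recall that the septa-hexagon ${\cal C}$ consists of exactly $7$ cells, each a regular hexagon of side length $\frac{1}{2}$. For each of these $7$ cells $H_i$ ($i = 1,\dots,7$), consider $S_1 \cap H_i$. If this set is non-empty, pick any representative point $p_i \in S_1 \cap H_i$; otherwise skip the cell. Let $F$ be the collection of chosen representatives. By construction, $F \subseteq S_1 \subseteq S_2$ and $|F| \leq 7$.

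It remains to check that the unit disks centered at $F$ cover all of $S_1$. Any point $q \in S_1$ lies in some cell $H_i$ of $\cal C$, so $S_1 \cap H_i \neq \emptyset$ and a representative $p_i \in F$ was chosen from $H_i$. Applying Lemma \ref{lemma-1x} to the cell $H_i$, the unit radius disk centered at $p_i$ covers every point inside $H_i$, and in particular covers $q$. Hence $S_1 \subseteq \bigcup_{d \in \Delta(F)} d$, so $F$ is a feasible solution to the single septa-hexagon MDS problem of size at most $7$, and therefore $|OPT_{\cal C}| \leq |F| \leq 7$.

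There is no real obstacle here; the argument is essentially a pigeonhole on the $7$ cells combined with Lemma \ref{lemma-1x}. The only small point to check carefully is that the representatives $p_i$ are legitimate centers, i.e.\ lie in $S_2$, which is precisely why I start the proof with the inclusion $S_1 \subseteq S_2$.
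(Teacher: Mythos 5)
Your argument is correct and is essentially the paper's own proof: pick one representative from each non-empty cell and invoke Lemma \ref{lemma-1x}. The only addition is your explicit check that $S_1 \subseteq S_2$, which the paper leaves implicit but which is a worthwhile detail to verify.
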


\begin{proof}
The septa-hexagon ${\cal C}$ has at most 7 non-empty cells. From Lemma \ref{lemma-1x}, we know that 
an unit radius disk centered at a point in a cell covers all points in that cell. Therefore one point 
from each of the non-empty cells is sufficient to cover all the points in ${\cal C}$. Thus the 
Lemma follows. 
\end{proof}

\begin{algorithm}[!ht]
\caption{Algorithm\_4\_Factor($S, {\cal C}, n$)}
\begin{algorithmic}[1]
\STATE {\bf Input:} A set $S$ of $n$ points and a septa-hexagon ${\cal C}$

\STATE {\bf Output:} A set $S' (\subseteq S)$ such that $(S \cap {\cal C}) \subseteq \bigcup_{d \in \Delta(S')}d$. 

\STATE $S' \leftarrow \emptyset$
\IF{($S \cap {\cal C} \neq \emptyset$)}
  \STATE Choose one arbitrary point from each non-empty cell of ${\cal C}$ and add to $S'$.
  \STATE $m \leftarrow |S'|$ /* $m$ is at most 7 */
  \STATE  Let $S_1 = S \cap {\cal C}$ and $S_2 = \chi(S_1, S)$. 
  \FOR{($i = m-1, m-2, \ldots, 1$)}
    \IF{($i=6$)}
      \FOR {(Each possible combination of 5 points $X = \{p_1, p_2, \ldots, p_5\}$ of $S_2$)}
	  \STATE Find $Y \subseteq S_1$ such that no point in $Y$ is covered by $\bigcup_{d \in \Delta(X)}d$.
         
          \STATE Compute the farthest point Voronoi diagram of $Y$ \cite{BCKO08}
          \STATE Find a point $p$ (if any) from $S_2\setminus X$ (using planar point location algorithm \cite{PS09}) 
	    such that the farthest point in $Y$ from $p$ is less than or equal to 1. If such $p$ exists, then 
	    set $S' \leftarrow X \cup \{p\}$ and exit {\bf for} loop.
      \ENDFOR
    \ELSE
      \FOR {(Each possible combination of $i$ points $X = \{p_1, p_2, \ldots, p_i\}$ of $S_2$)}
	\IF{($S_1 \subseteq \bigcup_{d \in \Delta(X)}d$)}
	  \STATE Set $S' \leftarrow X$ and exit from {\bf for} loop 
	\ENDIF
      \ENDFOR
    \ENDIF
  \ENDFOR
\ENDIF
\STATE Return $S'$
\end{algorithmic}
\label{algo-4factor}
\end{algorithm}

\begin{lemma} \label{lemma-3x}
For a given set $S$ of $n$ points and a septa-hexagon ${\cal C}$, the Algorithm \ref{algo-4factor} computes an 
MDS for $S \cap {\cal C}$ using the points of $S$ in $O(n^6 \log n)$ time.
\end{lemma}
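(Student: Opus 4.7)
The plan is to establish two things about Algorithm \ref{algo-4factor}: (i) the returned $S'$ is a minimum-cardinality subset of $S$ whose unit disks cover $S_1 = S\cap {\cal C}$; and (ii) the running time is $O(n^6\log n)$. For correctness, I would first note that by Lemma \ref{lemma-1x} the initial $S'$, formed by picking one point per non-empty cell of ${\cal C}$, already dominates $S_1$, and by Lemma \ref{lemma-2x} we have $|OPT_{\cal C}|\le m\le 7$. Since the algorithm decreases $i$ from $m-1$ down to $1$ and exhaustively enumerates every $i$-subset $X$ of $S_2$, the set $S'$ it finally retains is the smallest $i$ for which some $i$-subset dominates $S_1$, which is exactly $|OPT_{\cal C}|$. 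Because only candidates from $S_2=\chi(S_1,S)$ need ever be considered (any point of $S\setminus S_2$ covers no point of $S_1$), restricting the search to $S_2$ loses no optimum.

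The step that requires separate justification is the special treatment of $i=6$. I would argue that every $6$-element subset of $S_2$ can be split into a $5$-element subset $X$ and a remaining point $p\in S_2\setminus X$, so iterating $X$ over all $5$-subsets and then searching for a suitable $p$ is equivalent to enumerating all $6$-subsets. Given $X$, let $Y$ be the points of $S_1$ not yet covered by $\Delta(X)$; the $6$-subset $X\cup\{p\}$ dominates $S_1$ iff every point of $Y$ lies within distance $1$ from $p$, i.e.\ the farthest point of $Y$ from $p$ is at most $1$. The farthest-point Voronoi diagram of $Y$ is precisely the structure that, via planar point location, answers this query in $O(\log n)$ per candidate $p$. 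Hence the $i=6$ branch correctly decides existence of a valid $6$-subset.

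For running time, I would tally the cost contributed by each value of $i$. For $i=6$ there are $O(n^5)$ five-subsets $X$; constructing $Y$ takes $O(n)$, building the farthest-point Voronoi diagram of $Y$ takes $O(n\log n)$, and scanning the $O(n)$ candidates $p$ via point location costs $O(n\log n)$, yielding $O(n^5)\cdot O(n\log n)=O(n^6\log n)$. For each $i\in\{1,\dots,5\}$ there are $O(n^i)$ subsets and each coverage test takes $O(n)$, contributing at most $O(n^6)$ in total. The dominant term is therefore $O(n^6\log n)$.

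The only delicate point is the $i=6$ reduction: one must carefully verify that enumerating $5$-subsets together with one extra point covers every $6$-subset without double-counting or missing any, and that the farthest-point Voronoi diagram query correctly characterizes coverage by a single disk. Everything else is a straightforward enumeration-and-cost bookkeeping argument combined with the two preceding lemmas.
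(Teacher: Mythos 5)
Your proof is correct and follows essentially the same route as the paper's: exhaustive enumeration over subset sizes, justified by Lemmas \ref{lemma-1x} and \ref{lemma-2x}, with the $i=6$ case reduced to enumerating $5$-subsets plus a farthest-point Voronoi/point-location query, and the same $O(n^6\log n)$ versus $O(n^6)$ cost tally. You actually justify the $i=6$ reduction and the restriction of candidates to $S_2$ in more detail than the paper, which merely asserts these steps.
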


\begin{proof}
 The optimality of the Algorithm \ref{algo-4factor} follows from the fact that Algorithm \ref{algo-4factor} 
 considers all possible set of sizes $0, 1, \ldots, 7$ (see Lemma \ref{lemma-2x}) as its solution and 
 reports minimum size solution.
 
 The line number 7 of the algorithm can be computed in $O(n \log n)$ time as follows: (i) computation of the set 
 $S_1$ takes $O(n)$ time, (ii) computation of $S_2$ can be done in $O(n \log n)$ time using nearest 
 point Voronoi diagram of $S_1$ in $O(n \log n)$ time and for each point $p \in S$ apply planar point 
 location algorithm to find the nearest point in $S_1$ in $O(\log n)$ time.
 
 The running time of the {\bf else} part in the line number 15 of the algorithm is at most $O(n^6)$ time. 
 The worst case running time of the algorithm comes from line numbers 9-14. The complexity of line numbers 11-13 
 is $O(n \log n)$ time. Therefore the running time of the line numbers 9-14 is $O(n^6 \log n)$ time. 
 Thus the overall worst case running time of the proposed Algorithm \ref{algo-4factor} is $O(n^6 \log n)$.
\end{proof}

Let us consider a septa-hexagonal partition of ${\cal R}$ such that no point of $S$ is on the boundary of 
any septa-hexagon and a 4 coloring scheme of it (see Figure \ref{fig:fig10}). Consider an unicolor 
septa-hexagon of color A (say). Its adjacent septa-hexagons are assigned colors B, C and D (say) such 
that opposite septa-hexagons are assigned the same color (see Figure \ref{fig:fig10}).  

\begin{figure}[ht]
\begin{center}
\includegraphics[height=3in]{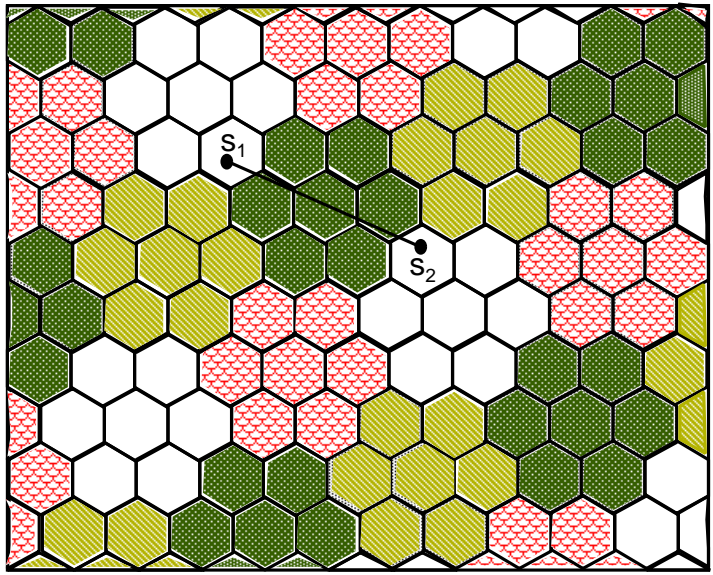}
\caption{A septa-hexagonal partition and 4-coloring scheme}
\label{fig:fig10}
\end{center}
\end{figure}

\vspace{-0.1in}
\begin{lemma} \label{lemma-4x}
 If ${\cal C}'$ and ${\cal C}''$ are two same colored septa-hexagons, then 
 $({\cal C}' \cup {\cal C}'') \cap S \cap d = \emptyset$ for any unit radius disk $d$.
\end{lemma}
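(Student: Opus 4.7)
The plan is to show that any two distinct same-colored septa-hexagons have minimum pairwise distance strictly greater than $2$; since every unit radius disk has diameter $2$, this rules out a single such disk containing points of $S$ from both.

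First I would identify the lattice of septa-hexagon centers. In the hexagonal grid of cells of side $\frac{1}{2}$, the septa-hexagon centers form a triangular sublattice of index $7$ generated by two vectors of length $\sqrt{21}/2$ at $60^\circ$ apart, for instance $e_1 = (3/4, 5\sqrt{3}/4)$ and $e_2 = (9/4, \sqrt{3}/4)$. The 4-coloring of Figure~\ref{fig:fig10}, in which the six neighbors of any color-A septa-hexagon receive colors $B, C, D, B, C, D$ with opposite colors equal, corresponds up to relabeling to assigning color by the coordinate parities $(\alpha \bmod 2, \beta \bmod 2)$ in the $(e_1, e_2)$ basis. Two septa-hexagons then share a color exactly when their centers differ by an even integer combination of the generators, so the shortest same-colored displacement has length $\sqrt{21}$.

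Next I would bound the minimum distance between a septa-hexagon $\mathcal{C}$ and its translate $\mathcal{C} + v$ in the tight case $|v| = \sqrt{21}$. Since $\mathcal{C}$ is convex and centrally symmetric, $\mathrm{dist}(\mathcal{C}, \mathcal{C}+v) = \mathrm{dist}(v, \mathcal{C} - \mathcal{C}) = \mathrm{dist}(v, 2\mathcal{C})$. As $2\mathcal{C}$ is an explicit $18$-gon (six ``shared'' vertices at distance $2$ and twelve ``outer'' vertices at distance $\sqrt{7}$ from the origin), I would enumerate its vertices and incident edges and locate the point nearest to $v$; for $v = 2e_1$ this turns out to be an outer vertex at distance exactly $2$ from $v$, which one confirms is the global minimum by checking that $v - w$ lies in the normal cone at that vertex $w$. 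The $6$-fold rotational symmetry of $\mathcal{C}$ extends this to all six shortest same-colored displacements, and longer same-colored displacements yield strictly larger distances.

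Finally, the septa-hexagonal partition is chosen so that no point of $S$ lies on any septa-hexagon boundary, hence points of $S$ in $\mathcal{C}'$ and $\mathcal{C}''$ lie in their open interiors. Writing $p \in \mathrm{int}(\mathcal{C}')$ and $q \in \mathrm{int}(\mathcal{C}'')$ with $\mathcal{C}'' = \mathcal{C}' + v$, the difference $p - (q - v)$ lies in $\mathrm{int}(\mathcal{C}') - \mathrm{int}(\mathcal{C}') = \mathrm{int}(2\mathcal{C}')$, and a short strict-convexity argument upgrades the non-strict bound of $2$ to $|p - q| > 2$; so no unit radius disk can contain both $p$ and $q$. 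The main technical obstacle is the distance computation in the tight case: the minimum of $\mathrm{dist}(v, 2\mathcal{C})$ is attained not along the ray from the origin through $v$ but at a particular outer vertex of the $18$-gon boundary, so the verification requires edge enumeration or normal-cone reasoning rather than a direct radial projection.
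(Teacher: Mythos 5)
Your overall strategy is the same as the paper's: show that any two same-colored septa-hexagons are separated by distance greater than $2$ for points of $S$ (which avoid the boundaries), so that no disk of diameter $2$ can meet $S$ inside both. The paper disposes of this in one sentence by appeal to the figure; you attempt the actual computation, which is worth doing because the bound is tight --- the closed regions really are at distance exactly $2$, and the lemma survives only because $S$ avoids the septa-hexagon boundaries. Your lattice setup (generators of length $\sqrt{21}/2$ at $60^\circ$, the parity coloring forced by the ``opposite neighbors get the same color'' rule, shortest same-colored displacement of length $\sqrt{21}$) is correct, and so is your final numerical answer.

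However, there is a genuine error in your reduction. A septa-hexagon is \emph{not} convex: at each of the six boundary vertices where two petal cells meet the exterior, three hexagons of the tiling meet and the flower occupies two of them, so the interior angle of ${\cal C}$ there is $240^\circ$. Consequently the identity ${\cal C}-{\cal C}=2{\cal C}$ fails; one only has $2{\cal C}\subseteq {\cal C}-{\cal C}$, and the inclusion is strict (if $w_1,w_2$ are two adjacent convex-hull vertices of ${\cal C}$, then $w_1+w_2=w_1-(-w_2)\in{\cal C}-{\cal C}$, yet it lies outside $2{\cal C}$ because the boundary of ${\cal C}$ dips inward to a reflex vertex between $w_1$ and $w_2$). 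This matters because the inclusion goes the wrong way for your purpose: ${\rm dist}(v,{\cal C}-{\cal C})\le{\rm dist}(v,2{\cal C})$, so computing ${\rm dist}(v,2{\cal C})=2$ only shows the separation is \emph{at most} $2$, not at least $2$. The repair is short: replace ${\cal C}$ by its convex hull $H$, the centrally symmetric $12$-gon obtained by dropping the six reflex vertices. Then ${\rm dist}({\cal C},{\cal C}+v)\ge{\rm dist}(H,H+v)={\rm dist}(v,2H)$ is a legitimate lower bound, and for $v=2e_1$ one checks that $v-w$ lies in the normal cone of $2H$ at the same outer vertex $w$ you identified (in fact on its boundary), giving ${\rm dist}(v,2H)=2$ exactly. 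Alternatively, compute the distance between the two $18$-gons directly; the minimum is attained at a unique vertex--vertex pair at distance exactly $2$. Either way, your final step --- interior points give strict inequality, hence no unit radius disk covers points of $S$ from both regions --- then goes through.
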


\begin{proof}
According to the 4-coloring scheme, size of the septa-hexagons, and no point of $S$ is on the boundary 
of ${\cal C}'$ and ${\cal C}''$ the minimum distance between two points $s_1 \in {\cal C}' \cap S$ 
and $s_2 \in {\cal C}''\cap S$) is greater than 2 (see Figure \ref{fig:fig10}). Thus the lemma follows.
\end{proof}

\begin{theorem} \label{theorem-1y}
The 4-coloring scheme gives a 4-factor approximation algorithm for the MDS problem in $O(n^{6}\log n)$ time, 
where $n$ is the input size.
\end{theorem}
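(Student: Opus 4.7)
The plan is to define a global algorithm that, for the septa-hexagonal partition of $\mathcal{R}$ shown in Figure~\ref{fig:fig10}, invokes Algorithm~\ref{algo-4factor} once per non-empty septa-hexagon and outputs the union of the returned point sets. First I would verify correctness by noting that every point of $S$ lies in exactly one septa-hexagon $\mathcal{C}_i$ (after shifting so that no point of $S$ is on a boundary), and Algorithm~\ref{algo-4factor} guarantees that the returned $D_i$ dominates $S \cap \mathcal{C}_i$, hence the union dominates all of $S$.

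For the approximation factor, I would fix one of the four colors $c$ and let $S_c$ denote the union of $S \cap \mathcal{C}_i$ over all septa-hexagons $\mathcal{C}_i$ of color $c$. Using Lemma~\ref{lemma-4x}, no single unit radius disk with center in $S$ can cover points of $S$ belonging to two distinct same-colored septa-hexagons. Consequently, any dominating set for $S_c$ decomposes into independent pieces, one per $\mathcal{C}_i$, so the sum of the local minima equals the global minimum for $S_c$. Since Algorithm~\ref{algo-4factor} is optimal on each septa-hexagon (Lemma~\ref{lemma-3x}), the union $D_c$ of its outputs over the color-$c$ septa-hexagons satisfies $|D_c| = OPT_c \le |OPT|$, where $OPT_c$ is the minimum dominating set cardinality for $S_c$; the last inequality holds because the global optimum $OPT$ already dominates $S \supseteq S_c$. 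Summing over the four colors, the overall output has size at most $4|OPT|$.

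For the running time, the direct bound of $O(n^6\log n)$ per septa-hexagon from Lemma~\ref{lemma-3x} times the $O(n)$ non-empty septa-hexagons naively gives $O(n^7\log n)$, so the argument must be sharpened. Let $n_i$ denote the size of the candidate set $\chi(S\cap\mathcal{C}_i,S)$ for the $i$-th septa-hexagon; Algorithm~\ref{algo-4factor} spends $O(n_i^5 \cdot n\log n)$ time on it (the enumeration of $5$-subsets in lines 9--14 dominates). Because each point $p\in S$ lies in $\chi(S\cap\mathcal{C}_i,S)$ only when some point of $S\cap\mathcal{C}_i$ sits within distance $1$ of $p$, and since only $O(1)$ septa-hexagons fit within distance $1$ of $p$, each point contributes to at most a constant number of the $n_i$'s; therefore $\sum_i n_i = O(n)$. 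Applying the elementary inequality $\sum_i n_i^5 \le (\sum_i n_i)^5$, valid for non-negative terms, the total cost is bounded by $O(n^5) \cdot O(n\log n) = O(n^6 \log n)$.

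The main obstacle is precisely this last running-time analysis: the approximation quality is a clean consequence of the already-proved Lemmas~\ref{lemma-3x} and~\ref{lemma-4x}, but matching the claimed $O(n^6\log n)$ bound rather than $O(n^7\log n)$ hinges on the sparsity observation that each input point participates in only $O(1)$ of the per-septa-hexagon candidate sets, together with the super-additivity inequality applied to the per-septa-hexagon costs.
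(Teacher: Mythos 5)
Your proposal is correct and follows essentially the same route as the paper: a 4-coloring of the septa-hexagonal partition, independence of same-colored septa-hexagons via Lemma~\ref{lemma-4x} so that each color class is solved optimally and charged against $|OPT|$, and a summation over the four colors. Your sharpened running-time argument (each point lies in $O(1)$ of the candidate sets $\chi(S\cap\mathcal{C}_i,S)$, hence $\sum_i n_i = O(n)$ and $\sum_i n_i^5 \le \bigl(\sum_i n_i\bigr)^5$) is precisely the detail behind the paper's one-line remark that ``each point in $S$ can participate in Algorithm~\ref{algo-4factor} at most a constant number of times,'' so it is an elaboration rather than a different approach.
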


\begin{proof}
Let $N_1, N_2, N_3$, and $N_4$ be the sets of septa-hexagons of colors $A, B, C$, and $D$ respectively. 
Let $S_1^i = S \cap \bigcup_{{\cal C} \in N_i} {\cal C}$ and $S_2^i = \chi(S_1^i, S)$ 
for $1 \leq i \leq 4$. By Lemma 
\ref{lemma-4x}, the pair ($S_1^i, S_2^i$) can be partitioned into $|N_i|$ pairs ($S_{1j}^i, S_{2j}^i$) 
such that for each pair Algorithm \ref{algo-4factor} is applicable for solving the covering problem optimally 
to cover $S_1^i$ using $S_2^i$, where $1 \leq j \leq |N_i|$. Let $N_i'$ be the optimum solution for the set  
$S_1^i$ ($1 \leq i \leq 4$) using the Algorithm \ref{algo-4factor}. If $OPT$ is the optimum solution for the 
set $S$, then $|N_i'| \leq |OPT|$. Therefore $\Sigma_{i=1}^4 |N_i'| \leq 4 \times |OPT|$. Thus the 
approximation factor of the algorithm follows.

The time complexity result of the theorem follows from Lemma \ref{lemma-3x} and the fact that each point in 
$S$ can participate in the Algorithm \ref{algo-4factor} at most constant number of times. 
\end{proof}

\section{3-Factor Approximation Algorithm for the MDS Problem}\label{3factor}
Given a set $S$ of $n$ points in a rectangular region ${\cal R}$, we wish to find an MDS for $S$. 
Here we present a 3-factor approximation algorithm in $O(n^{11} \log n)$ time for the MDS 
problem, which is an improvement by a factor of $O(n^4)$ over the best known result available in 
the literature \cite{DDCN13}. 

\begin{definition}
A {\it super-cell} is a combination of 15 regular hexagons of side length $\frac{1}{2}$ arranged 
in three consecutive rows as shown in Figure \ref{fig:fig14}. 
\end{definition}

\begin{figure}[ht]
\begin{center}
\includegraphics[]{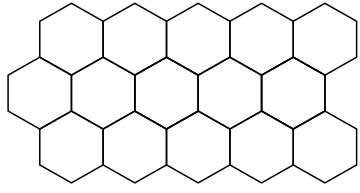}
\caption{An example of a super-cell}
\label{fig:fig14}
\end{center}
\end{figure}

\subsection{Algorithm overview}
Let us consider a {\it super-cell} ${\cal D}$. Let $S_1 = S \cap {\cal D}$ and $S_2 = \chi(S_1, S)$. 
In order to obtain 3-factor approximation algorithm for the MDS problem, we first find a minimum 
size subset $S' \subseteq S_2$ such that $S_1 \subseteq \bigcup_{d \in \Delta(S')}d$. Call this problem 
as a {\it single super-cell MDS} problem. Using the optimum solution of single super-cell MDS problem, we 
present our main 3-factor approximation algorithm. 

\begin{lemma} \label{lemma-5x}
If $OPT_{\cal D}$ is the minimum cardinality subset of $S_2$ such that 
$S_1 \subseteq \bigcup_{d \in \Delta(OPT_{\cal D})} d$, then $|OPT_{\cal D}| \leq 15$.
\end{lemma}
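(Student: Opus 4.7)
The plan is to mirror the proof of Lemma \ref{lemma-2x} almost verbatim, exploiting the structural analogy between a septa-hexagon and a super-cell.

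First I would observe that, by definition, the super-cell ${\cal D}$ is built from 15 cells (regular hexagons of side length $\tfrac{1}{2}$), so $S_1 = S \cap {\cal D}$ decomposes as the disjoint union of $S \cap c$ as $c$ ranges over these 15 cells. Let $k$ be the number of non-empty cells; clearly $k \le 15$.

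Next I would invoke Lemma \ref{lemma-1x}: from each non-empty cell $c$ pick any witness point $p_c \in S \cap c$. Since $p_c$ lies in $c$, the unit disk centered at $p_c$ covers every point of $S \cap c$. Let $T$ be the collection of these at most 15 witness points. Then $T \subseteq S_1$ and $S_1 \subseteq \bigcup_{d \in \Delta(T)} d$.

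I then need to check that $T$ is actually an admissible candidate, i.e.\ $T \subseteq S_2$. This follows directly from the definition of $\chi$: each $p_c \in S_1 \subseteq S$, and the disk centered at $p_c$ covers at least one point of $S_1$ (namely $p_c$ itself), so $p_c \in \chi(S_1, S) = S_2$. Hence $T$ is a feasible solution to the single super-cell MDS problem of cardinality at most $15$, which forces $|OPT_{\cal D}| \le |T| \le 15$.

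Unlike in Lemma \ref{lemma-2x}, there is essentially no obstacle here: the only subtlety worth emphasizing is the membership $T \subseteq S_2$, which is immediate but should be written out so that the reader is not left wondering whether the trivial cover by in-cell points is of the correct type. Everything else is a direct count of the 15 cells composing ${\cal D}$.
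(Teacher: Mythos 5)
Your argument is correct and is essentially the paper's own proof: the paper likewise deduces the bound from Lemma \ref{lemma-1x} together with the fact that ${\cal D}$ has at most 15 non-empty cells, exactly as in Lemma \ref{lemma-2x}. Your extra remark that the witness points lie in $S_2$ is a small but valid point of care that the paper leaves implicit.
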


\begin{proof}
The lemma follows from the Lemma \ref{lemma-1x} and the fact that the super-cell ${\cal D}$ 
has at most 15 non-empty cells.
\end{proof}

We decompose a super-cell ${\cal D}$ into 3 regions namely $G_{\cal D}^1, G_{\cal D}^2$, and $G_{\cal D}^3$ 
(see Figure \ref{fig:fig16},  where $G_{\cal D}^1, G_{\cal D}^2$, and $G_{\cal D}^3$ correspond to unshaded, 
light shaded, and dark shaded regions respectively). 

\begin{figure}[ht]
\begin{center}
\includegraphics[]{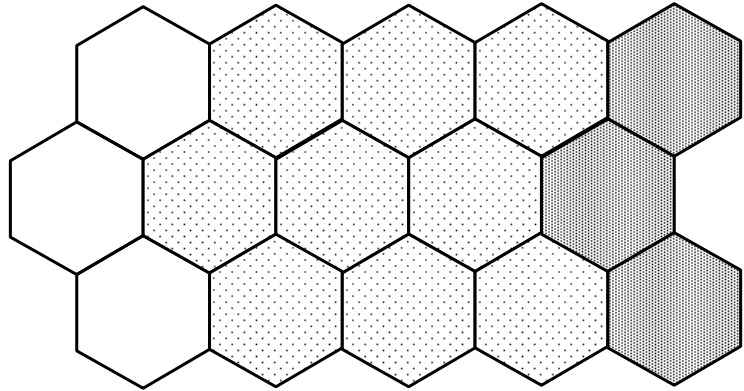}
\caption{Decomposition of a super-cell}
\label{fig:fig16}
\end{center}
\end{figure}

\begin{lemma} \label{lemma-6x}
For any unit radius disk $d$ and a super-cell ${\cal D}$, $(G_{\cal D}^1 \cup G_{\cal D}^3) \cap d = \emptyset$. 
\end{lemma}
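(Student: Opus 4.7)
Following the convention already established in Lemma~\ref{lemma-4x}, I read the claim as saying that no single unit radius disk $d$ can simultaneously meet both $G_{\cal D}^1$ and $G_{\cal D}^3$: the middle region $G_{\cal D}^2$ acts as a separating strip that is strictly wider than the diameter of $d$. The plan is therefore to prove
\[
  \inf_{p\in G_{\cal D}^1,\; q\in G_{\cal D}^3}\|p-q\| \; > \; 2,
\]
after which the conclusion is immediate, since a unit radius disk has diameter exactly $2$ and cannot contain two points at distance greater than $2$.

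First I would unpack the decomposition of Figure~\ref{fig:fig16}. The super-cell ${\cal D}$ consists of $15$ cells arranged in three consecutive rows, with each cell a regular hexagon of side~$\tfrac12$. The unshaded region $G_{\cal D}^1$ and the dark-shaded region $G_{\cal D}^3$ sit on opposite sides of the light-shaded buffer $G_{\cal D}^2$; the role of $G_{\cal D}^2$ is exactly to be a strip that is a little more than $2$ units across in the direction perpendicular to $G_{\cal D}^1$ and $G_{\cal D}^3$. Because every relevant boundary edge is either a side of a hexagon of side $\tfrac12$ or a line through pairs of hexagon vertices, every coordinate involved is a rational combination of $\tfrac12$ and $\tfrac{\sqrt3}{2}$, so the width of $G_{\cal D}^2$ can be read off exactly from the figure.

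Second, I would convert this width into a distance bound between the two outer regions. Any point $p\in G_{\cal D}^1$ lies on one side of a supporting line of $G_{\cal D}^2$, and any point $q\in G_{\cal D}^3$ lies on the opposite side; the perpendicular distance between these two supporting lines is the width of $G_{\cal D}^2$, which is strictly larger than $2$. Consequently $\|p-q\|$ exceeds this width and, in particular, exceeds $2$. Plugging this into the diameter observation above completes the argument along the same lines as the proof of Lemma~\ref{lemma-4x}.

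The main obstacle here is purely bookkeeping: one has to correctly identify the topmost boundary of $G_{\cal D}^3$ and the bottommost boundary of $G_{\cal D}^1$ on the zig-zag hexagonal interface, and check that even the two closest vertices on these boundaries are more than $2$ apart. Since the boundary is piecewise linear and all vertices have explicit coordinates on the $\bigl(\tfrac12,\tfrac{\sqrt3}{2}\bigr)$-grid, the verification amounts to a single elementary calculation rather than any genuinely subtle geometry.
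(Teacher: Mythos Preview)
Your proposal is correct and matches the paper's own proof almost exactly: the paper simply asserts that any $s\in G_{\cal D}^1$ and $t\in G_{\cal D}^3$ satisfy $\|s-t\|>2$, from which the claim follows since a unit radius disk has diameter $2$. Your version is just a more carefully spelled out variant of the same one-line argument, including the (correct) reading of the statement in the spirit of Lemma~\ref{lemma-4x}.
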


\begin{proof}
The lemma follows from the fact that if $s$ and $t$ are two arbitrary points of $G_{\cal D}^1$ and $G_{\cal D}^3$ 
respectively, then the Euclidean distance between $s$ and $t$ is greater than 2. 
\end{proof}

Let $S_1 = S \cap {\cal D}$ and $S_2 = \chi(S_1, S)$, where ${\cal D}$ is a super-cell. Our objective 
is to find a minimum cardinality set $S' (\subseteq S_2)$ such that $S_1 \subseteq \bigcup_{d \in \Delta(S')}d$.

Let $S_1^1 =  S_1 \cap G_{\cal D}^1$, $S_1^2 =  S_1 \cap G_{\cal D}^2$, and $S_1^3 =  S_1 \cap G_{\cal D}^3$. 
A point on a boundary can be assigned to any set associated with that boundary. Let $S_2^1 = \chi(S_1^1, S_2)$, 
$S_2^2 = \chi(S_1^2, S_2)$, and $S_2^3 = \chi(S_1^3, S_2)$. The Lemma \ref{lemma-6x} says that 
$S_2^1 \cap S_2^3 = \emptyset$.

\begin{algorithm}
\caption{Algorithm\_3\_Factor($S, {\cal D}, n$)}
\begin{algorithmic}[1]
\STATE {\bf Input:} A set $S$ of $n$ points and a super-cell ${\cal D}$

\STATE {\bf Output:} A set $S' (\subseteq S)$ such that $(S \cap {\cal D}) \subseteq \bigcup_{d \in \Delta(S')}d$

\STATE $S' \leftarrow S$.
\STATE Find the sets $S_1^1, S_1^2,S_1^3, S_2^1, S_2^2$,and $S_2^3$ as defined above.

\FOR {(Each possible combination $X = \{p_1, p_2, \ldots, p_j\}$ of $j (0 \leq j \leq 9)$ points in $S_2^2$)}
  \IF {($S_1^2 \subseteq \bigcup_{d \in \Delta(X)}d$)}
	\STATE Let $U$ and $V$ be the subsets of $S_1^1$ and $S_1^3$ respectively such that no point in 
	$U \cup V$ is covered by $\bigcup_{d \in \Delta(X)}d$.
	
	\STATE Let $Y$ be the minimum size subset of $S_2^1$ such that $U \subseteq \bigcup_{d \in \Delta(Y)}d$.
	  
	\STATE Let $Z$ be the minimum size subset of $S_2^3$ such that $V \subseteq \bigcup_{d \in \Delta(Z)}d$.
	
	\IF{($|S'| > |X| + |Y| + |Z|$)}
	    \STATE Set $S' \leftarrow X \cup Y \cup Z$
	\ENDIF
  \ENDIF
\ENDFOR
\STATE Return $S'$
\end{algorithmic}
\label{algo-3factor}
\end{algorithm}

\begin{lemma} \label{lemma-7x}
For a given set $S$ of $n$ points and a super-cell ${\cal D}$, the Algorithm \ref{algo-3factor} computes an 
MDS for $S \cap {\cal D}$ using the points of $S$ in $O(n^{11} \log n)$ time.
\end{lemma}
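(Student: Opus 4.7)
The plan is to verify correctness of Algorithm~\ref{algo-3factor} as an exact solver for the single super-cell MDS problem, and then to account for its $O(n^{11}\log n)$ running time.

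For correctness I would fix an arbitrary optimum $OPT_{\cal D}\subseteq S_2$ and decompose it along the three regions of ${\cal D}$ in Figure~\ref{fig:fig16} by setting $X^{\star}=OPT_{\cal D}\cap S_2^2$, $Y^{\star}=OPT_{\cal D}\cap(S_2^1\setminus S_2^2)$ and $Z^{\star}=OPT_{\cal D}\cap(S_2^3\setminus S_2^2)$. Lemma~\ref{lemma-6x} forces $S_2^1\cap S_2^3=\emptyset$, so the three sets partition $OPT_{\cal D}$ (every disk of the optimum must cover some point of $S_1$, so it lies in $S_2^1\cup S_2^2\cup S_2^3$). Every point of $S_1^2$ must be covered, and only disks centred in $S_2^2$ can reach $S_1^2$, so $X^{\star}$ covers $S_1^2$; because Lemma~\ref{lemma-6x} prevents disks in $Y^{\star}$ from ever touching $G^3_{\cal D}$ (and symmetrically for $Z^{\star}$), the residuals $U^{\star}=S_1^1\setminus\bigcup_{d\in\Delta(X^{\star})}d$ and $V^{\star}=S_1^3\setminus\bigcup_{d\in\Delta(X^{\star})}d$ are covered by $Y^{\star}$ and $Z^{\star}$ respectively. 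An exchange argument applying Lemma~\ref{lemma-1x} to the at most nine non-empty cells of $G^2_{\cal D}$ lets me assume without loss of generality that $|X^{\star}|\le 9$, placing $X^{\star}$ inside the outer enumeration. At that iteration the minima $Y,Z$ from Lines~8--9 satisfy $|Y|\le |Y^{\star}|$ and $|Z|\le |Z^{\star}|$, so $|S'|\le |X^{\star}|+|Y^{\star}|+|Z^{\star}|=|OPT_{\cal D}|$; equality follows because every $S'$ the algorithm retains is a feasible cover.

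For the running time, the outer loop enumerates $\sum_{j=0}^{9}\binom{n}{j}=O(n^9)$ subsets $X$. For each $X$, the coverage test on $S_1^2$ and the extraction of $U,V$ cost $O(n)$ time. The minimum covers $Y,Z$ in Lines~8--9 are reduced cover problems on $G^1_{\cal D}$ and $G^3_{\cal D}$, whose optima are of size at most three by Lemma~\ref{lemma-1x} (each region has at most three cells). I would solve each one by the enumerate-plus-Voronoi technique of Algorithm~\ref{algo-4factor}: enumerate all size-at-most-two subsets of $S_2^1$ (resp.\ $S_2^3$), compute the residual of $U$ (resp.\ $V$) not covered by the two chosen disks, and locate the third disk, if needed, via planar point location on the farthest-point Voronoi diagram of that residual set. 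Carried out so that the Voronoi constructions are amortised across the inner enumeration, this costs $O(n^2\log n)$ per outer iteration and hence $O(n^{11}\log n)$ overall.

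The main obstacle is keeping the $Y$ and $Z$ subroutines at $O(n^2\log n)$: a naive enumeration of all size-three subsets of $S_2^1$ would already cost $O(n^3)$ merely to enumerate, inflating the total to $O(n^{12}\log n)$; avoiding this requires extracting the third disk from a single farthest-point Voronoi query rather than from explicit enumeration, and re-using the Voronoi structures across the inner loop. The other delicate step is the exchange argument underpinning $|X^{\star}|\le 9$: one must verify that replacing $OPT_{\cal D}\cap S_2^2$ by a size-at-most-nine cover of $S_1^2$ (obtained by taking one $S$-representative per non-empty cell of $G^2_{\cal D}$, as in Lemma~\ref{lemma-1x}) does not force $|Y^{\star}|+|Z^{\star}|$ to grow, which ultimately rests on the one-disk-per-cell structure of $G^2_{\cal D}$ combined with Lemma~\ref{lemma-6x}.
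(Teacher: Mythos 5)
Your correctness skeleton---decomposing a minimal optimum into $X^{\star}=OPT_{\cal D}\cap S_2^2$ and the two remainders in $S_2^1$ and $S_2^3$, which are disjoint by Lemma~\ref{lemma-6x}, and observing that the residuals $U^{\star}$ and $V^{\star}$ must be covered by those remainders so that the enumeration at $X=X^{\star}$ returns a set of size at most $|OPT_{\cal D}|$---is a more explicit version of what the paper disposes of in one sentence, and reducing everything to the claim that some optimum satisfies $|OPT_{\cal D}\cap S_2^2|\le 9$ correctly isolates the crux. You assert that claim via an unexecuted ``exchange argument''; the paper is no more forthcoming on this point, so the weakest link is shared rather than introduced by you.

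The genuine gap is in your running-time account of Lines 8--9. As you describe the subroutine, for each of the $O(n^2)$ subsets of $S_2^1$ of size at most two you compute the residual of $U$ and build a farthest-point Voronoi diagram of that residual; since the residual changes with every enumerated pair, each iteration costs $\Theta(n\log n)$, the subroutine costs $O(n^3\log n)$, and the total becomes $O(n^{12}\log n)$. Your proposed repair---``amortising the Voronoi constructions across the inner enumeration''---is not substantiated, and there is no shared structure among the $O(n^2)$ distinct residual sets that would permit it. The paper's actual device is the observation you invoke only to bound the optimum: because $G_{\cal D}^1$ has at most three non-empty cells, one point of $S$ per non-empty cell is always a feasible cover of $U$ of size at most three by Lemma~\ref{lemma-1x}, so a size-three cover never needs to be \emph{searched for}---it is optimal by default whenever the searches for sizes $0$, $1$ and $2$ fail. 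Consequently only subsets of size at most one are enumerated explicitly, each followed by a single farthest-point Voronoi construction and a point-location pass over the candidates, giving $O(n^2\log n)$ per call to Lines 8--9 and $O(n^{11}\log n)$ overall, as in the proof of Lemma~\ref{lemma-3x}.
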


\begin{proof}
 In the case of selecting 3 points in $S_2^1$ in line number 8 of the algorithm, we can choose one point from 
 each of the non-empty cells  of $G_{\cal D}^1$. Therefore, the worst case of line number 8 appears for the 
 case of choosing all possible combinations of two points in $S_2^1$. This can be done in $O(n^2 \log n)$ using 
 the technique of the Algorithm \ref{algo-4factor} (line numbers 12-13). Similar analysis is applicable to line 
 number 9. Line numbers 6-7 and 10-12 can be implemented in $O(n)$ time.
 
 The worst case running time of the algorithm depends on the {\bf for} loop in the line number 5. In this 
 {\bf for} loop, we are choosing all possible 9 points from a set of $n$ points in worst case. Therefore the
 time complexity of the Algorithm \ref{algo-3factor} is $O(n^{11} \log n)$.  
 
 The optimality of the algorithm follows from the Lemma \ref{lemma-6x} and fact that Algorithm \ref{algo-3factor} 
 considers all possible combinations as its solution and reports minimum size solution.
 
 Note that Algorithm \ref{algo-3factor} checks {\bf if} condition in line number 6 because of the definition 
 of $S_2^1, S_2^2$, and $S_2^3$. 
\end{proof}

Let us consider a super-cell partition of ${\cal R}$ such that no point of $S$ lies on the boundary and a 
3-coloring scheme (see Figure \ref{fig:fig15}). Consider an unicolor super-cell which has been assigned 
color A (say). Its adjacent super-cells are assigned colors B, and C alternately (see Figure~\ref{fig:fig15}). 

\begin{figure}[ht]
\begin{center}
\includegraphics[height=4cm,width=9cm]{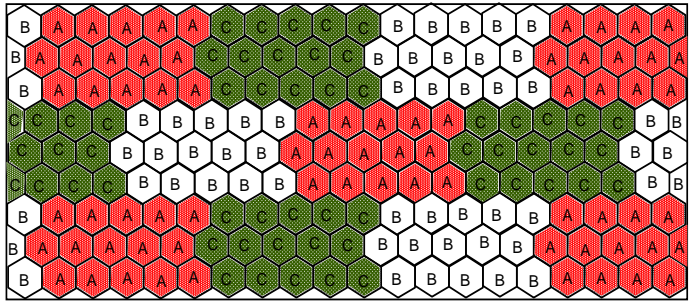}
\caption{A super-cell partition and 3-coloring scheme}
\label{fig:fig15}
\end{center}
\end{figure}

\begin{lemma} \label{lemma-8x}
 If ${\cal D}'$ and ${\cal D}''$ are two same colored super-cells, then 
 $({\cal D}' \cup {\cal D}'') \cap S \cap d = \emptyset$ for any unit radius disk $d$.
\end{lemma}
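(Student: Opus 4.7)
The plan is to adapt the argument used for Lemma~\ref{lemma-4x} almost verbatim, only now the separating geometry comes from the super-cell partition rather than the septa-hexagon partition. The statement really says that no single unit radius disk can cover a point of $S \cap {\cal D}'$ together with a point of $S \cap {\cal D}''$, so the whole question reduces to a lower bound on the Euclidean distance between any such pair of points.

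First I would pin down the separation in Figure~\ref{fig:fig15}. A super-cell spans three consecutive rows of cells (regular hexagons of side $\tfrac{1}{2}$), and in the 3-coloring scheme any two super-cells of the same color are separated by at least one intervening super-cell of another color, either laterally within the same strip of rows or across an adjacent strip. Using the fact that a hexagon of side $\tfrac{1}{2}$ has width $\tfrac{\sqrt{3}}{2}$ between parallel sides and $1$ vertex-to-vertex, one checks from the geometry of Figure~\ref{fig:fig15} that the minimum Euclidean distance between any point of ${\cal D}'$ and any point of ${\cal D}''$ is at least $2$. The hypothesis that no point of $S$ lies on a super-cell boundary then upgrades this to the strict inequality $\|s_1 - s_2\| > 2$ for every $s_1 \in {\cal D}' \cap S$ and $s_2 \in {\cal D}'' \cap S$.

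Once that distance bound is in hand the conclusion is immediate: a unit radius disk has diameter $2$, so it cannot contain two points separated by a distance strictly greater than $2$. Hence for every unit radius disk $d$ at most one of ${\cal D}' \cap S \cap d$ and ${\cal D}'' \cap S \cap d$ is non-empty, which is exactly the asserted statement (interpreted in the same way as Lemma~\ref{lemma-4x}).

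The only non-trivial step is the geometric verification that the 3-coloring really does guarantee a separation exceeding $2$; that is where one must use the specific shape of the super-cell and the precise coloring pattern in Figure~\ref{fig:fig15}. I would argue this by picking the two worst-case configurations of same-colored neighbours (one where the two super-cells are in the same horizontal band, and one where they are in adjacent bands) and measuring the gap along the perpendicular bisector of the separating strip, rather than grinding through coordinates of every boundary cell.
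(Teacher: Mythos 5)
Your proposal is correct and follows essentially the same route as the paper, whose proof is just a terse citation of the same three facts you use (the size of the super-cells, the absence of boundary points of $S$, and the 3-coloring scheme) to conclude that same-colored super-cells are separated by distance greater than $2$ and hence cannot both meet a single unit radius disk. You also rightly read the set equation in the spirit of Lemma~\ref{lemma-4x}, i.e.\ that $d$ cannot simultaneously contain a point of $S$ from each of ${\cal D}'$ and ${\cal D}''$, which is the intended meaning.
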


\begin{proof}
The lemma follows from the following facts: (i) size of the super-cells ${\cal D}'$ and ${\cal D}''$ 
(ii) no point of $S$ on the boundary of ${\cal D}'$ and ${\cal D}''$, and (iii) the 3-coloring scheme.  
\end{proof}

\begin{theorem} \label{theorem-2y}
 The 3-coloring scheme gives a 3-factor approximation algorithm for the MDS problem in 
 $O(n^{11} \log n)$ time, where $n$ is the input size.
\end{theorem}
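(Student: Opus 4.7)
The plan is to mirror the proof of Theorem \ref{theorem-1y} almost verbatim, substituting super-cells for septa-hexagons and the 3-coloring for the 4-coloring. First, I would fix a super-cell partition of $\mathcal{R}$ together with the 3-coloring of Figure \ref{fig:fig15}, with the partition shifted generically so that no point of $S$ lies on any boundary. Let $N_1, N_2, N_3$ denote the super-cells of colors $A, B, C$ respectively, and for each color $i$ set $S_1^i = S \cap \bigcup_{\mathcal{D} \in N_i} \mathcal{D}$ and $S_2^i = \chi(S_1^i, S)$.

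Next I would invoke Lemma \ref{lemma-8x} to decouple the covering problem within each color class: two same-colored super-cells cannot share a covering unit disk, so for each fixed $i$ the pair $(S_1^i, S_2^i)$ splits into $|N_i|$ independent local pairs $(S_{1j}^i, S_{2j}^i)$, one per non-empty super-cell $\mathcal{D}_j \in N_i$. Running Algorithm \ref{algo-3factor} on each such pair and taking the union gives a set $N_i'$ which, by Lemma \ref{lemma-7x}, is a minimum-size subset of $S_2^i$ whose unit disks cover $S_1^i$. For the approximation ratio, let $OPT$ be an optimal MDS for $S$. Then $OPT \cap S_2^i$ is a feasible cover of $S_1^i$ drawn from $S_2^i$, hence $|N_i'| \leq |OPT \cap S_2^i| \leq |OPT|$. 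The output $N_1' \cup N_2' \cup N_3'$ covers all of $S$ and has size at most $3|OPT|$; equivalently, the cheapest of the three color classes already yields the claimed factor $3$.

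For the running time, if $m_j$ denotes the number of points processed by the $j$-th invocation of Algorithm \ref{algo-3factor}, then by Lemma \ref{lemma-7x} that invocation costs $O(m_j^{11} \log m_j)$. Because each point of $S$ belongs to exactly one super-cell per color and therefore appears in only a constant number of the sets $S_{1j}^i$ and (via the $\chi$ relation with its $O(1)$ neighboring super-cells) only a constant number of the sets $S_{2j}^i$, we have $\sum_{i,j} m_j = O(n)$, and so $\sum_{i,j} m_j^{11} \log m_j = O(n^{11} \log n)$. The only delicate point is the decoupling step: one must check that Lemma \ref{lemma-8x} really supplies enough separation for the local sets $S_{2j}^i$ within a fixed color class to be mutually non-interfering (any optimal $OPT$-disk is charged to at most one super-cell of each color). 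Once that is in hand, the remainder of the argument is a mechanical transcription of the proof of Theorem \ref{theorem-1y}.
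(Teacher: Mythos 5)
Your proof is correct and is essentially the paper's own argument: the paper proves Theorem~\ref{theorem-2y} by simply citing the argument of Theorem~\ref{theorem-1y}, which is exactly the transcription you carry out (super-cell partition with the 3-coloring, decoupling of same-colored super-cells via Lemma~\ref{lemma-8x}, per-super-cell optimality via Lemma~\ref{lemma-7x}, and summing $|N_i'| \leq |OPT|$ over the three color classes, plus the observation that each point participates in only $O(1)$ invocations for the running time). One small caveat: your aside that ``the cheapest of the three color classes already yields the claimed factor $3$'' is not right --- a single color class does not cover all of $S$; the returned solution must be the union $N_1' \cup N_2' \cup N_3'$, whose size is at most $3|OPT|$ exactly as you state in the preceding clause.
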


\begin{proof}
The follows by the similar argument of Theorem \ref{theorem-1y}.   
\end{proof}

\section{Shifting Strategy and its Application to the MDS Problem} \label{ShiftingStrategy}
In this section, we first propose a shifting strategy for the MDS problem, which is 
a generalization of the shifting strategy proposed by Hochbaum and Maass \cite{HM85}. Next we propose 
$\frac{5}{2}$-factor approximation algorithm and a PTAS algorithm for MDS problem using our shifting strategy.  

\subsection{The Shifting Strategy} \label{shifting-strategy}
Our shifting strategy is very similar to the shifting strategy in \cite{HM85}. We include a brief 
discussion here for completeness. Let a set $S$ of $n$ points be distributed inside an axis aligned 
rectangular region ${\cal R}$. Our objective is to find an MDS for $S$. 

\begin{definition}
A {\it monotone chain} $c$ with respect to line $L$ is a chain of line segments 
such that any line perpendicular to $L$ intersect it only once. We define the distance between two 
monotone chains $c'$ and $c''$ as the minimum Euclidean distance between any two points $p'$ and $p''$ 
on the chains $c'$ and $c''$ respectively. A {\it monotone strip} denoted by $M_s$ and is defined by the 
area bounded by any two monotone chains $c'$ and $c''$ such that the area is left closed and 
right open.
\end{definition}

Consider a set $c_1, c_2, \ldots, c_r$ of $r$ monotone chains with respect to the line parallel to $y$-axis 
from left to right dividing the region ${\cal R}$ such that distance between each pair of monotone chains is 
at least $D (>0)$, where $c_1$ and $c_r$ are the left and right boundary of ${\cal R}$ respectively 
(see Figure \ref{shifting}). Let ${\cal A}$ be an $\alpha$-factor approximation algorithm, which provides a 
solution of any $\ell$ consecutive monotone strips for the MDS problem. 

\begin{figure*}[!ht]
\begin{center}
\includegraphics[height=2.2in]{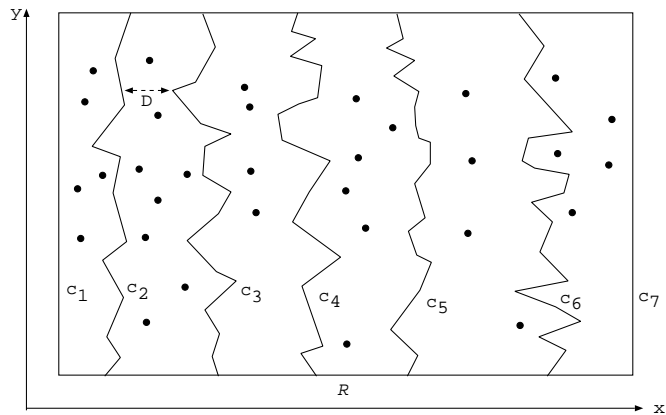}
\end{center} 
\caption{Demonstration of shifting strategy}
\label{shifting}
\end{figure*}

\begin{theorem} \label{shift-factor}
We can design an $\alpha(1+\frac{1}{\ell})$-factor approximation algorithm for finding an MDS for $S$.
\end{theorem}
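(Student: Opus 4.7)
The plan is to adapt the classical Hochbaum--Maass shifting argument to this monotone-strip setting. For each shift parameter $i \in \{0, 1, \ldots, \ell-1\}$, I would partition the sequence of monotone strips into a leading (possibly partial) block of $i$ strips followed by consecutive blocks of exactly $\ell$ strips (with a possibly partial trailing block at the right end of $\mathcal{R}$). I would then run algorithm $\mathcal{A}$ on each block independently, take the union of the returned solutions as the shift-$i$ solution $SH_i$, and output $SH^{\ast} = \arg\min_{0 \le i \le \ell-1} |SH_i|$.

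To analyse a single shift, let $OPT$ be an optimal MDS for $S$ and, for each block $B$ of shift $i$, let $OPT^{(i)}_{B}$ be the subset of points $p \in OPT$ whose unit disk covers at least one point of $S$ lying inside $B$. Then $OPT^{(i)}_{B} \subseteq S$ is itself a valid dominating set (drawn from $S$) for the points of $S$ lying inside $B$, so the $\alpha$-approximation guarantee of $\mathcal{A}$ applied to this block of $\ell$ consecutive strips gives $|\mathcal{A}(B)| \le \alpha\,|OPT^{(i)}_{B}|$, and summing over blocks yields $|SH_i| \le \alpha \sum_{B} |OPT^{(i)}_{B}|$.

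The heart of the argument is now an averaging step. For each $p \in OPT$, the points of $S$ covered by $p$'s unit disk lie in at most two consecutive strips; call this set $J_p$. For every shift $i$, $p$ contributes to $OPT^{(i)}_{B}$ for at least one block, and for two blocks only in the unique shift in which a block boundary falls strictly between the two strips of $J_p$. Hence
\[
\sum_{i=0}^{\ell-1} \sum_{B} |OPT^{(i)}_{B}| \;\le\; (\ell + 1)\,|OPT|,
\]
so the minimum is at most the average and
\[
\min_{0 \le i \le \ell-1} |SH_i| \;\le\; \frac{\alpha(\ell+1)}{\ell}\,|OPT| \;=\; \alpha\bigl(1+\tfrac{1}{\ell}\bigr)|OPT|,
\]
which is exactly the claimed factor.

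The step where I expect the main obstacle to lie is justifying the ``at most two consecutive strips'' claim above. Because each chain $c_j$ is monotone with respect to a vertical line, any segment joining a point in strip $s_j$ to a point in strip $s_{j+2}$ must cross the intermediate strip $s_{j+1}$, whose width is at least $D$; a unit disk has diameter $2$, so provided $D \ge 2$ no unit disk can simultaneously contain points from three different strips. This spatial separation is precisely what the hypothesis on the inter-chain distance is there to give us, and the remainder of the argument is the routine Hochbaum--Maass averaging applied to our blocks of $\ell$ monotone strips.
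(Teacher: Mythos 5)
Your proposal is correct and is essentially the paper's own proof: the paper simply states that the algorithm and the approximation bound are ``exactly the same'' as Hochbaum and Maass's shifting lemma, and what you have written out is precisely that shifting/averaging argument adapted to the monotone strips, including the key observation that a unit disk meets at most two consecutive strips when $D \geq 2$. You have in fact supplied more detail than the paper does, but there is no difference in approach.
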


\begin{proof}
The algorithm is exactly same as the algorithm proposed by Hochbaum and Maass \cite{HM85}. The approximation factor 
follows from exactly the same argument proved in the shifting lemma \cite{HM85}. 
\end{proof}

\subsection{$\frac{5}{2}$-Factor Approximation Algorithm for the MDS Problem}
Here we propose a $\frac{5}{2}$-factor approximation algorithm for MDS problem for a given 
set $S$ of $n$ points in $\IR^2$ using shifting strategy discussed in Subsection \ref{shifting-strategy}. 

\begin{definition}
 A {\it duper-cell} is a combination of 30 cells (regular hexagon of side length $\frac{1}{2}$) as shown in 
 Figure \ref{figure-8}. A duper-cell ${\cal E}$ generates four monotone chains with respect to 
 vertical and horizontal lines along its boundary. See Figure \ref{figure-8}, where $uv, vw, wx$, and $xu$ 
 are the monotone chains. We rename them as {\bf left, bottom, right}, and {\bf top} monotone chains.
\end{definition}

The basic idea is as follows: first optimally solve the subproblem {\it duper-cell} i.e., 
find an MDS for the set $S \cap {\cal E}$, where ${\cal E}$ is a duper-cell and then apply shifting 
strategy in both horizontal and vertical directions separately. The Lemma \ref{lemma-1x} leads 
to restriction on the size of the MDS, which is at most 30. Therefore an easy optimum solution 
for MDS can be obtained in $O(n^{30})$ time. Here we propose a different technique for the MDS problem 
leading to lower time complexity as follows:

\begin{figure*}[!ht]
\begin{center}
\includegraphics[height=1.5in]{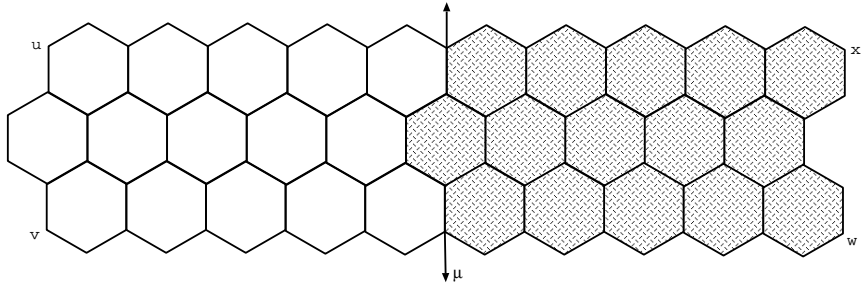}
\end{center} 
\caption{Demonstration of $\frac{5}{2}$-factor approximation algorithm}
\label{figure-8}
\end{figure*}

We divide the duper-cell ${\cal E}$ into 2 groups  unshaded region ($U_R$) and shaded region ($S_R$) as shown in 
Figure \ref{figure-8}. Let $\mu$ be the common boundary of the regions and two extended lines 
(see Figure \ref{figure-8}). Let $Q_1$ and $Q_2$ be two sets of points in the left (resp. right) of $\mu$ such 
that each disk in $\Delta(Q_1)$ and $\Delta(Q_2)$ intersects $\mu$. 

\begin{algorithm}[!ht]
\caption{MDS\_for\_duper-cell($S, {\cal E}, n$)}
\begin{algorithmic}[1]
\STATE {\bf Input:} A set $S$ of $n$ points and a duper-cell ${\cal E}$.
\STATE {\bf Output:} A set $S' (\subseteq S)$ for an MDS of $S \cap {\cal E}$.

\STATE Find $Q_1$ and $Q_2$ as described above. 
\STATE Let $S_2^L$ and $S_2^R$ be the set of points in $S\setminus (Q_1 \cup Q_2)$ such that each 
disk in $\Delta(S_2^L)$ and $\Delta(S_2^R)$ covers at least one point in $S \cap U_R$ and 
$S \cap S_R$ respectively. 

\STATE $S' \leftarrow \emptyset, X \leftarrow \emptyset$
\FOR {($i=0, 1, \ldots, 9$)}
  \STATE choose all possible $i$ disks in $\Delta(Q_1)$ (resp. $\Delta(Q_2)$) and for each combination 
  of $i$ disks find $S_1^L$ and $S_1^R$ such that $S_1^L \subseteq (S \cap U_R)$ and uncovered by that $i$ disks, 
  and $S_1^R \subseteq (S \cap S_R)$ and uncovered by that $i$ disks.
  
  \STATE Call Algorithm \ref{algo-3factor} for finding an MDS for the sets $S_1^L$ and $S_1^R$ separately.

\ENDFOR
\STATE Return $S'$
\end{algorithmic}
\label{algo-shifting}
\end{algorithm}

\begin{lemma} \label{lemma-9x}
An MDS for the set of points inside a duper-cell ${\cal E}$ can be computed optimally 
in $O(n^{20} \log n)$ time, where $n$ is the input size.
\end{lemma}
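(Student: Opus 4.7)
The strategy is to reduce the duper-cell MDS problem to two super-cell MDS instances that can be solved optimally with Algorithm~\ref{algo-3factor}. The boundary $\mu$ splits $\mathcal{E}$ into the two super-cell-shaped regions $U_R$ and $S_R$. Any optimum disk whose center is sufficiently far from $\mu$ can cover points of at most one of $U_R, S_R$, and therefore contributes to only one subproblem; the only troublesome disks are those whose centers lie in $Q_1 \cup Q_2$, since these may simultaneously cover points on both sides of $\mu$. Algorithm~\ref{algo-shifting} guesses this troublesome portion of the optimum by brute-force enumeration and then solves the two residual subproblems via Algorithm~\ref{algo-3factor}.

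The critical geometric claim I would need is that any optimum MDS $OPT_{\cal E}$ contains at most $9$ disks whose centers lie in $Q_1 \cup Q_2$. I would establish this by a packing argument: the interface $\mu$ has bounded length (being a short piecewise-linear chain through a $30$-hexagon region), any unit disk crossing $\mu$ has its center within distance~$1$ of $\mu$, and in an optimum solution no two disks can dominate each other so their centers must be separated by at least some positive constant determined by the point configuration. Estimating the maximum number of unit disks that can be packed in the $O(1)$-area strip around $\mu$ under these constraints yields the bound $9$. This is the main obstacle in the proof and would be handled by a direct case analysis on the shape of $\mu$ displayed in Figure~\ref{figure-8}.

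Granted this bound, correctness follows directly. Put $X^{\ast} = OPT_{\cal E} \cap (Q_1 \cup Q_2)$, so $|X^{\ast}| \le 9$. When the outer loop reaches $i = |X^{\ast}|$ and the inner enumeration happens to pick $X = X^{\ast}$, the residual uncovered sets $S_1^L \subseteq S \cap U_R$ and $S_1^R \subseteq S \cap S_R$ are precisely the points left uncovered by $\Delta(X^{\ast})$. The remaining disks in $OPT_{\cal E} \setminus X^{\ast}$ do not intersect $\mu$, so those centered on the left of $\mu$ cover only $S_1^L$ and those centered on the right cover only $S_1^R$; each side therefore admits a feasible cover using disks of $S_2^L$ or $S_2^R$ whose combined size is $|OPT_{\cal E}| - |X^{\ast}|$. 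Applying Lemma~\ref{lemma-7x} on each side returns minimum covers $Y^{\ast} \subseteq S_2^L$ and $Z^{\ast} \subseteq S_2^R$, which yields $|X^{\ast} \cup Y^{\ast} \cup Z^{\ast}| \le |OPT_{\cal E}|$. Since $X^{\ast} \cup Y^{\ast} \cup Z^{\ast}$ is feasible by construction, equality holds and the algorithm outputs an optimum MDS.

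For the running time, the enumeration loop considers at most $\sum_{i=0}^{9} \binom{|Q_1|+|Q_2|}{i} = O(n^9)$ subsets $X$. For each such $X$, constructing the residuals $S_1^L, S_1^R$ takes $O(n)$ time, and the two invocations of Algorithm~\ref{algo-3factor} each cost $O(n^{11} \log n)$ by Lemma~\ref{lemma-7x}. Multiplying gives $O(n^9) \cdot O(n^{11} \log n) = O(n^{20} \log n)$ overall, as claimed.
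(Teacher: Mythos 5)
Your algorithmic reading and the complexity accounting match the paper's: enumerate $O(n^9)$ candidate sets of ``crossing'' centers from $Q_1\cup Q_2$, solve the two residual super-cell instances optimally with Algorithm~\ref{algo-3factor} at cost $O(n^{11}\log n)$ each (Lemma~\ref{lemma-7x}), and multiply. Your decomposition argument --- a disk centered at a point of $S\setminus(Q_1\cup Q_2)$ does not meet $\mu$ and hence cannot cover points on both sides, so once the crossing disks of an optimum are guessed the two halves separate into independent super-cell problems --- is also exactly the mechanism the paper relies on.

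The gap is precisely where you flag it, and the argument you sketch to close it would not work. You need: some optimum for ${\cal E}$ contains at most $9$ centers in $Q_1\cup Q_2$. Your sketch rests on the premise that ``in an optimum solution no two disks can dominate each other so their centers must be separated by at least some positive constant determined by the point configuration.'' That premise cannot yield a packing bound: two centers of an optimal dominating set may be arbitrarily close (each disk can own a private point near the far end of its radius), and a separation constant that \emph{depends on the point configuration} can be arbitrarily small, so no configuration-independent count --- let alone the specific value $9$ --- follows. The paper's justification is of a different kind: it bounds by $9$ the number of \emph{cells} that the crossing disks can interact with across $\mu$ and leans on Lemma~\ref{lemma-1x} (a single center in a cell dominates the whole cell), i.e., a cell-counting/replacement argument rather than a disk-packing argument. (The paper's own one-sentence version of this step is itself terse, but it is anchored to that true geometric fact.) As written, your proof establishes correctness and the $O(n^{20}\log n)$ bound only conditionally on the unproved claim $|X^{\ast}|\le 9$, and the route you propose for proving that claim needs to be replaced, not merely elaborated.
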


\begin{proof}
 The time complexity of line number 8 of the Algorithm \ref{algo-shifting} is $O(n^{11} \log n)$ 
 (see Lemma \ref{lemma-7x}). The line number 8 executes at most $O(n^9)$ time by the {\bf for} loop 
 in line number 6. Therefore the time complexity of the lemma follows.
 
 In the {\bf for} loop (line number 6 of the algorithm), we considered 
 all possible $i$ ($0 \leq i \leq 9$) disks in $\Delta(Q_1)$ and $\Delta(Q_2)$ separately. Since 
 the number of cells that can intersect with such $i$ disks is at most 9, therefore the range of $i$ is correct. 
 For each combination of $i$ disks, we considered all possible combinations to solve the problem 
 for $S_1^L$ and $S_1^R$ separately. Therefore the correctness of the algorithm follows. 
\end{proof}

\begin{theorem} \label{theorem-3y}
The shifting strategy discussed in Subsection \ref{shifting-strategy} gives a $\frac{5}{2}$-factor 
approximation algorithm, which runs in $O(n^{20} \log n)$ time for the MDS problem, where $n$ is the 
input size.
\end{theorem}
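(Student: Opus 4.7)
The proof plan is to invoke the shifting theorem (Theorem~\ref{shift-factor}) twice, once in each axis direction, with Lemma~\ref{lemma-9x}'s optimal duper-cell algorithm playing the role of the base algorithm $\mathcal{A}$. The duper-cell $\mathcal{E}$ is bounded by four monotone chains (left, right, top, bottom), and by the hexagonal geometry two non-adjacent translates of $\mathcal{E}$ are separated by more than $2$, so no unit disk can simultaneously cover points from two non-adjacent duper-cells. Consequently, the MDS for the points in any horizontal band of $\ell_1$ consecutive duper-cell columns decomposes across individual duper-cells and can be solved optimally by invoking Algorithm~\ref{algo-shifting} on each one. Applying Theorem~\ref{shift-factor} with $\alpha = 1$ and parameter $\ell_1$ in the horizontal direction then yields a $(1+1/\ell_1)$-factor approximation algorithm.

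Next, I would take this horizontal algorithm as the new base algorithm and apply Theorem~\ref{shift-factor} a second time in the vertical direction, using the top/bottom monotone chains of the duper-cell together with a vertical shift parameter $\ell_2$. This produces an $(1+1/\ell_1)(1+1/\ell_2)$-factor approximation algorithm for the whole set $S$. With the $30$-cell layout of the duper-cell, the natural parameter choice is $\ell_1 = 4$ and $\ell_2 = 1$ (or symmetrically), giving the desired factor $\tfrac{5}{4}\cdot 2 = \tfrac{5}{2}$.

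For the running time, Lemma~\ref{lemma-9x} costs $O(n^{20}\log n)$ per duper-cell subproblem. Amortizing over all shifts in both directions, every point of $S$ participates in only a constant number of such subproblems, so the overall running time remains $O(n^{20}\log n)$. The main obstacle I expect is the geometric bookkeeping for the double shift: I would need to verify that the duper-cell's width in each direction is compatible with its role both as a single ``strip'' in one direction and as one of $\ell_1$ stacked strips in the other, so that the ``separation by more than $2$'' condition needed for the per-duper-cell decoupling genuinely holds for every relevant shifted tiling. Once that is pinned down, the approximation factor and complexity follow mechanically from Lemma~\ref{lemma-9x} and Theorem~\ref{shift-factor}, exactly as in the analogous argument for Theorem~\ref{theorem-2y}.
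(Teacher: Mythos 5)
Your proposal is correct and follows essentially the same route as the paper: apply Theorem~\ref{shift-factor} twice, horizontally with $\ell=4$ (duper-cell width $>8=4D$) and vertically with $\ell=1$ (duper-cell height $=2=D$), with the optimal duper-cell solver of Lemma~\ref{lemma-9x} as the $\alpha=1$ base algorithm, giving $(1+\tfrac14)(1+\tfrac11)=\tfrac52$ and $O(n^{20}\log n)$ time by the constant-participation argument. One small caution: your intermediate claim that a width-$8$ slab ``decomposes across individual duper-cells and can be solved optimally'' is not right (vertically adjacent duper-cells are not separated by more than $2$), but it is also not needed --- the second, vertical application of the shifting strategy with $\ell=1$ is exactly what accounts for that interaction, so your final factor and complexity stand.
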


\begin{proof}
The distance between the monotone chains {\bf left} and {\bf right} of ${\cal E}$ is greater than 8,  
the distance between the monotone chains {\bf bottom} and {\bf top} is 2, and the diameter $(D)$ of the 
disks is 2. Now, if we apply shifting 
strategy in horizontal and vertical directions separately, then we get $(1+\frac{1}{4})(1+\frac{1}{1})$-factor 
i.e. $\frac{5}{2}$-factor approximation algorithm in $O(n^{20} \log n)$ time (see Lemma \ref{lemma-9x}) 
for the MDS problem. 
\end{proof}

\subsection{A PTAS for MDS Problem}
In this section, we present a $(1+\frac{1}{k})^2$-factor approximation algorithm in $n^{O(k)}$ time for 
a positive integer $k$. Suppose a set $S$ of $n$ points within a rectangular region ${\cal R}$ is given. 
Consider a partition of ${\cal R}$ into regular hexagonal cells of side length $\frac{1}{2}$. The idea 
of our algorithm is to solve the MDS problem optimally for the points inside regular hexagons (say ${\cal F}$) 
such that the distance between {\bf left} and {\bf right} (resp. {\bf bottom} and {\bf top}) monotone chains 
is $2k$ (see Figure \ref{figure-9}) and using our proposed shifting strategy carefully 
(see Subsection \ref{shifting-strategy}). 

\begin{figure*}[!ht]
\begin{center}
\includegraphics[height=2in]{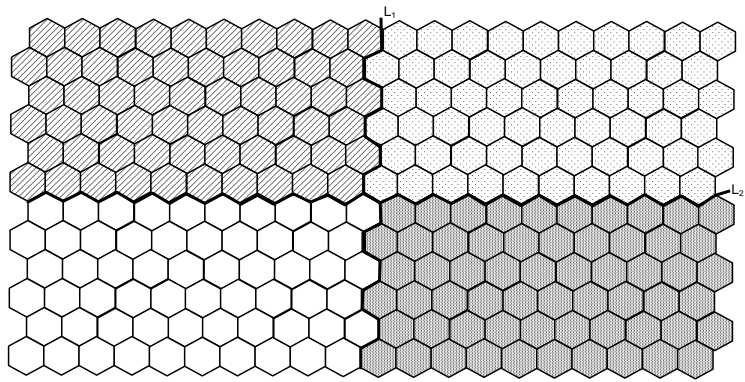}
\end{center} 
\caption{Demonstration of PTAS}
\label{figure-9}
\end{figure*}

To solve the MDS problem in $S \cap {\cal F}$ we further decompose ${\cal F}$ into four parts using 
the monotone chains $L_1$ and $L_2$ as shown in Figure \ref{figure-9}. The number of disks in the 
optimum solution intersecting the chain $L_1$ with centers {\bf left} (resp. {\bf right}) side of 
$L_1$ is at most $\lceil 3 \times 3 \times \frac{2k}{2} \rceil$ which is less than $10k$ and the 
number of disks in the optimum solution intersecting the chain $L_2$ with centers {\bf bottom} 
(resp. {\bf top}) side of $L_2$ is at most $\lceil 5 \times 3 \times \frac{2k}{4} \rceil$ which 
is less than $8k$. Next we apply recursive procedure to solve four independent sub-problems of 
size $k \times k$. If $T(n,2k)$ is the running time of the recursive algorithm for the MDS problem 
for $S \cap {\cal F}$, then using the technique of \cite{DDCN13} we have the following recurrence 
relation: $T(n, 2k) = 4 \times T(n, k) \times n ^{10k + 8k}$, which leads to the following theorem. 

\begin{theorem}\label{theorem-4y}
 For a given set $S$ of $n$ points in $\IR^2$, the proposed algorithm produces an MDS of  
 $S$ in $n^{O(k)}$ time, whose size is at most $(1+\frac{1}{k})^2 \times |OPT|$, where $k$ is a 
 positive integer and $OPT$ is the optimum solution. 
\end{theorem}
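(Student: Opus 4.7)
The plan is to establish the theorem by combining two ingredients: an exact recursive algorithm for computing an MDS restricted to a single block $\mathcal{F}$ of size $2k \times 2k$ (measured in units where the disk diameter equals $2$), and two applications of the shifting strategy (Theorem~\ref{shift-factor}) in perpendicular directions.

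First I would describe the exact algorithm on $\mathcal{F}$. The key idea is to guess, by brute-force enumeration, exactly those disks of an optimal solution whose centers lie close enough to the dividing monotone chains $L_1$ and $L_2$ that the corresponding unit disks actually cross them. As noted in the excerpt, any disk whose center is on one given side of $L_1$ and which intersects $L_1$ has its center inside a strip of width at most $1$ and height $2k$, so only a bounded number (at most $10k$ when the two sides of $L_1$ are counted together) of cells are relevant; an analogous count yields at most $8k$ for $L_2$. Therefore enumerating all combinations of at most $10k + 8k = 18k$ candidate centers from the $n$ input points costs $n^{O(k)}$. For each such guess, check that the chosen disks cover all points lying within distance $1$ of $L_1 \cup L_2$, then recurse independently on each of the four $k \times k$ quadrants of $\mathcal{F}$, restricted to the points not already covered. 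Taking the minimum over all guesses yields an exact MDS for $S \cap \mathcal{F}$, because some optimal solution is consistent with some guess by Lemmas~\ref{lemma-1x} and~\ref{lemma-6x}--style disjointness arguments (disks with centers in distinct quadrants and not crossing the chains cannot interact).

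Second, I would solve the recurrence. Let $T(n,2k)$ denote the running time of this recursive procedure. The enumeration contributes a factor of at most $n^{18k}$ and spawns four independent subproblems of halved parameter, giving
\begin{equation*}
T(n, 2k) \;\le\; 4 \cdot T(n, k) \cdot n^{18k}.
\end{equation*}
Unrolling this recurrence as in \cite{DDCN13} yields $T(n,2k) = n^{O(k)}$, with a base case handled trivially since at parameter $O(1)$ only $O(1)$ points are relevant and the MDS can be found in constant time.

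Third, I would apply the shifting strategy. Partition $\mathcal{R}$ by monotone chains (along hexagonal cell boundaries) parallel to the $y$-axis so that consecutive chains are at distance $2$ apart; an $\mathcal{F}$-block consists of $k$ consecutive such monotone strips, and the exact algorithm above is an $\alpha=1$ approximation for any $k$ consecutive strips. By Theorem~\ref{shift-factor} with $\ell = k$, shifting horizontally produces a $(1+\tfrac{1}{k})$-approximation. Repeating the construction with horizontal monotone chains and shifting vertically produces another $(1+\tfrac{1}{k})$ factor, composing to $(1+\tfrac{1}{k})^2 \cdot |OPT|$. The total running time remains $n^{O(k)}$ since each shift only multiplies the work by $k$.

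The main obstacle will be the bookkeeping that justifies why enumerating the at most $18k$ boundary disks truly decouples the four quadrants: one must argue that any disk in an optimal solution either is among the guessed ones (because it crosses $L_1$ or $L_2$) or lies entirely inside a single quadrant (because its center is far enough from both chains), so the uncovered points split cleanly into four independent subproblems. This uses the fact that a unit disk centered more than distance $1$ from both $L_1$ and $L_2$ cannot straddle them. Once this decoupling is verified, the approximation bound follows immediately from the shifting lemma and the time bound follows from solving the recurrence.
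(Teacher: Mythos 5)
Your proposal follows essentially the same route as the paper: guess the at most $10k+8k$ disks of an optimal solution crossing the chains $L_1$ and $L_2$, recurse on the four decoupled $k\times k$ quadrants via the recurrence $T(n,2k)\le 4\,T(n,k)\,n^{18k}$, and apply the shifting strategy of Theorem~\ref{shift-factor} with $\ell=k$ in both directions to obtain the $(1+\frac{1}{k})^2$ factor in $n^{O(k)}$ time. One small caution: you should not \emph{require} the guessed crossing disks to cover every point within distance $1$ of $L_1\cup L_2$, since such points may legitimately be covered by non-crossing disks found in the recursion; it suffices to recurse on the uncovered points of each quadrant, exactly as your decoupling argument at the end correctly describes.
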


\section{Conclusion}
In this paper, we proposed a series of constant factor approximation algorithms for the MDS 
problem for a given set $S$ of $n$ points. Here we used hexagonal partition very carefully.
We first presented a simple 4-factor and 3-factor approximation algorithms in $O(n^6 \log n)$ 
and $O(n^{11} \log n)$ time respectively, which improved the time complexities of best known 
result by a factor of $O(n^2)$ and $O(n^4)$ respectively \cite{DDCN13}. Finally, we proposed a 
very important shifting lemma and using this lemma we presented a $\frac{5}{2}$-factor approximation 
algorithm and a PTAS for the MDS problem. Though the complexity of the proposed PTAS is same as that 
of the PTAS proposed by De et al. \cite{DDCN13} in terms of $O$ notation, but the constant involved 
in our PTAS is smaller than the same in \cite{DDCN13}.

\bibliographystyle{abbrv}

\end{document}